%%%%%%%%%%%%%%%%%%%%%%% file template.tex %%%%%%%%%%%%%%%%%%%%%%%%%
%
% This is a general template file for the LaTeX package SVJour3
% for Springer journals.          Springer Heidelberg 2010/09/16
%
% Copy it to a new file with a new name and use it as the basis
% for your article. Delete % signs as needed.
%
% This template includes a few options for different layouts and
% content for various journals. Please consult a previous issue of
% your journal as needed.
%
%%%%%%%%%%%%%%%%%%%%%%%%%%%%%%%%%%%%%%%%%%%%%%%%%%%%%%%%%%%%%%%%%%%
%
% First comes an example EPS file -- just ignore it and
% proceed on the \documentclass line
% your LaTeX will extract the file if required
% [arxiv_v2: filecontents  stripped, 188 chars]
\RequirePackage{fix-cm}
\documentclass[smallextended]{svjour3}       % onecolumn (second format)
\smartqed  % flush right qed marks, e.g. at end of proof
\usepackage[cmex10]{amsmath}
\usepackage{graphicx}
\usepackage{color}
\usepackage{multirow}
\usepackage{epstopdf}
\usepackage{amssymb}
\usepackage{cite}
\usepackage{bbm}
\newtheorem{lem}{Lemma}

\newtheorem{prop}{Proposition}

%
% \usepackage{mathptmx}      % use Times fonts if available on your TeX system
%
% insert here the call for the packages your document requires
%\usepackage{latexsym}
% etc.
%
% please place your own definitions here and don't use \def but
% \newcommand{}{}
%
% Insert the name of "your journal" with
% \journalname{myjournal}
%
\begin{document}

\title{All-Optical FSO Relaying Under Mixture-Gamma Fading Channels and Pointing Errors%\thanks{Grants or other notes
%about the article that should go on the front page should be
%placed here. General acknowledgments should be placed at the end of the article.}
}

%\titlerunning{Short form of title}        % if too long for running head

\author{Nikolaos~I.~Miridakis         \and
       Dimitrios~D.~Vergados \and
			Angelos~Michalas%etc.
}

%\authorrunning{Short form of author list} % if too long for running head

\institute{Nikolaos~I.~Miridakis \at
              School of Electrical and Information Engineering and the Institute of Physical Internet, Jinan University, Zhuhai 519070, China and Department of Computer Systems Engineering, Piraeus University of Applied Sciences, 12244 Aegaleo, Greece \\
                            \email{nikozm@unipi.gr}           %  \\
%             \emph{Present address:} of F. Author  %  if needed
           \and
           Dimitrios~D.~Vergados \at
              Department of Informatics, University of Piraeus, 185 34, Piraeus, Greece
							\and
							Angelos~Michalas \at 
							Department of Informatics and Computer Technology, Technological Education Institute of Western Macedonia, 52 100, Kastoria, Greece
}

\date{Received: date / Accepted: date}
% The correct dates will be entered by the editor

\maketitle

\begin{abstract}
The performance of all-optical dual-hop relayed free-space optical communication systems is analytically studied and evaluated. We consider the case when the total received signal undergoes turbulence-induced channel fading, modeled by the versatile mixture-Gamma distribution. Also, the misalignment-induced fading due to the presence of pointing errors is jointly considered in the enclosed analysis. The performance of both amplify-and-forward and decode-and-forward relaying transmission is studied, when heterodyne detection is applied. New closed-form expressions are derived regarding some key performance metrics of the considered system; namely, the system outage probability and average bit-error rate. 

\keywords{Dual-hop relaying \and free-space optical (FSO) communications \and mixture-Gamma distribution \and pointing error misalignment}
% \PACS{PACS code1 \and PACS code2 \and more}
% \subclass{MSC code1 \and MSC code2 \and more}
\end{abstract}

\section{Introduction}
\label{intro}
One of the main challenges in free-space optical (FSO) communications is the channel fading due to turbulence-induced scintillation and misalignment. The former is mainly caused by the random fluctuation of the refractive index, whereas the latter by time-varying pointing errors due to thermal expansion, dynamic wind loads, and internal vibrations \cite{j:FaridHranilovic2012}. To mitigate channel fading, an all-optical cooperative transmission via relaying can be used, which brings impressive performance improvements in FSO systems \cite{j:KashaniUysal2013,j:ChatzidiamantisMichalopoulos2013,j:YangGaoAlouini2014}.

To efficiently address the composite turbulence-induced and misalignment fading, various distribution models have been proposed. Among them, the most popular ones (due to their accuracy and generality) are the Gamma-Gamma and M\'alaga distribution models. These models, combined with the pointing error effect, result to highly complicated probability density function (PDF)/cumulative distribution function (CDF) because the rather cumbersome high-order Meijer's $G$-function is included therein \cite{j:AnsariYilmaz2016}. Due to this reason, the performance of relayed transmission has been only merely studied so far in FSO communication systems. Specifically, only approximate analytical results have been presented to date, regarding the asymptotically high signal-to-noise ratio (SNR) regime (e.g., see \cite{j:AnsariYilmaz2016,j:YangGaoAlouini2014} and relevant references therein). Nevertheless, analytical results for the performance of all-optical relayed FSO systems in the entire SNR regime (low-to-high) are not available so far.

Capitalizing on the aforementioned observations, we study the performance of all optical dual-hop relayed FSO systems over composite fading channels and pointing errors in arbitrary SNR regions. Three popular relaying transmission protocols are adopted; namely, the channel state information (CSI)-assisted amplify-and-forward (AF), the AF with a fixed (statistical) gain, and the decode-and-forward (DF) relaying transmission schemes. The mixture Gamma (MG) distribution model is used to capture the statistical properties of turbulence-induced channel fading. It was recently indicated in \cite{j:SandalidisMG2016} that the MG distribution sharply coincides to both the Gamma-Gamma and M\'alaga generalized distribution models. Thus, it serves as an effective model to accurately approach channel fading in weak-to-strong turbulence conditions. Further, the misalignment-induced fading is also considered, by assuming the commonly adopted model of zero-boresight pointing errors. 

The analysis of current work focuses on the coherent heterodyne detection type. Although it represents a more complicated detection method than other alternatives, it has the ability to better overcome the thermal noise effects \cite{j:YangGaoAlouini2014,j:LiuYao2010}. New closed-form expressions are derived for key system performance metrics, i.e., the system outage probability and ABER. The derived results can serve as an efficient means of evaluating the performance of the considered relaying schemes, whereas they are much more time-effective than the available methods so far (i.e., time-consuming manifold numerical integrations and/or exhaustive simulations). 

\emph{Notation}: $f_{X}(\cdot)$, $F_{X}(\cdot)$ and $\bar{F}_{X}(\cdot)$ represent PDF, CDF and complementary CDF of the random variable (RV) $X$, respectively. Moreover, $\Gamma(\cdot)$ denotes the Gamma function \cite[Eq. (8.310.1)]{tables}, $\Gamma(\cdot,\cdot)$ is the upper incomplete Gamma function \cite[Eq. (8.350.2)]{tables}, and ${\rm erf(\cdot)}$ represents the error function \cite[Eq. (8.253.1)]{tables}. Finally, ${}_2F_1(\cdot,\cdot;\cdot;\cdot)$ denotes the Gauss hypergeometric function \cite[Eq. (9.100)]{tables}, while $W_{\mu,\nu}(\cdot)$ is the Whittaker's-$W$ hypergeometric function \cite[Eq. (9.220.4)]{tables}.

\section{System Model}
\label{System Model}
Consider an all-optical dual-hop relayed FSO system, which consists of a source, relay and destination node; all equipped with a single aperture. Two transmission phases (e.g., two consecutive time slots) are required to complete the end-to-end information transmission. In the first transmission phase, the source converts the RF signal to optical signal and send it to the relay node. Upon reception, the relay process the latter signal according to the underlying relaying transmission protocol. Then, in the second transmission phase, the relay retransmits the signal to the destination. It is assumed that the direct source-to-destination link does not exist, due to strong propagation attenuation and/or severe channel fading conditions. We retain our focus on three popular relaying protocols; namely, CSI-assisted AF relaying, AF relaying using a fixed (statistical) gain, and DF relaying. Correspondingly, the end-to-end SNR	is given by \cite{j:YangGaoAlouini2014}
\begin{align}
\nonumber
\left\{
\begin{array}{c l} 
\gamma_{\rm CSI}&\triangleq \frac{\gamma_{1}\gamma_{2}}{\gamma_{1}+\gamma_{2}+q}, \\
\gamma_{\rm Fixed}&\triangleq \frac{\gamma_{1}\gamma_{2}}{\gamma_{2}+U},\\
\gamma_{\rm DF}&\triangleq \min\{\gamma_{1},\gamma_{2}\},
\end{array}\right.
\end{align}  
where $\gamma_{1}$ and $\gamma_{2}$ denote the received SNR at the two consecutive transmission phases, while $U$ is a fixed constant related to the fixed AF gain. Also, the parameter $q\in\{0,1\}$ corresponds to the approximate CSI-assisted AF relayed case (when $q=0$) or the exact case (when $q=1$).

The received signal at the two transmission phases undergoes joint channel fading (irradiance) and misalignment due to the possible pointing errors, whereas it is subject to an additive white zero-mean Gaussian noise with power $N_{0}$. The recently proposed MG distribution is used to model the turbulence-induced fading. Doing so, it turns out that the PDF of SNR at the $j^{\rm th}$ transmission hop, with $j\in \{1,2\}$, reads as \cite{j:SandalidisMG2016}
\begin{align}
\nonumber
f_{\gamma_{j}}(x)=&\sum^{L}_{i_{j}=1}\frac{a_{i_{j}}c^{\xi^{2}_{j}-b_{i_{j}}}_{i_{j}}\xi^{2}_{j}x^{\xi^{2}_{j}-1}}{(A_{0_{j}}\bar{\gamma}_{j})^{\xi^{2}_{j}}}\\
&\times \Gamma\left(b_{i_{j}}-\xi^{2}_{j},\frac{c_{i_{j}}}{A_{0_{j}}\bar{\gamma}_{j}}x\right),\ \ x>0,
\label{snrpdf}
\end{align}   
where $\bar{\gamma}_{1}\triangleq P_{S}\eta_{1}\bar{I}_{1}/N_{0}$ and $\bar{\gamma}_{2}\triangleq P_{R}\eta_{2}\bar{I}_{2}/N_{0}$ denote the average received SNRs at the relay and destination nodes, respectively, while $\eta_{j}$ is the electrical-to-optical conversion coefficient of the $j^{\rm th}$ hop.\footnote{In the AF relaying case, where there is no optical-to-electrical conversion at the relay node, $\eta_{2}=1$ \cite{j:YangGaoAlouini2014}.} Also, $P_{S}$ and $P_{R}$ represent the transmit power of the source and relay, respectively, while $\bar{I}_{j}$ denotes the average received irradiance at the $j^{\rm th}$ hop, which is expressed as \cite{j:SandalidisMG2016}
\begin{align}
\bar{I}_{j}=\left(\frac{\xi^{2}_{j}A_{0_{j}}}{1+\xi^{2}_{j}}\right)\sum^{L}_{i_{j}=1}a_{i_{j}}\Gamma(1+b_{i_{j}})c^{-(1+b_{i_{j}})}_{i_{j}}.
\label{meanI}
\end{align} 
Moreover, $a_{i_{j}}$, $b_{i_{j}}$ and $c_{i_{j}}$ denote the parameters of MG distribution. As an illustrative example, to model Gamma-Gamma channel fading, these parameters are defined in \cite[Eq. (5)]{j:SandalidisMG2016}. A similar matching can be easily obtained when considering another generalized fading model, namely, the M\'alaga distribution \cite[Eq. (9)]{j:SandalidisMG2016}. At this point, it is worth noting that $b_{i_{j}}$ reflects the minimum between the small- and large scale channel fading parameters at the $j^{\rm th}$ hop \cite{j:MiridakisTsiftsisletter2017,j:MiridakisVergadosMichalllllas2017}. The total number of sum terms in \eqref{snrpdf} determines the accuracy level of MG distribution. It was shown in \cite[Figs. 2 and 3]{j:SandalidisMG2016} that a condition of $L\leq 10$ satisfies an acceptable accuracy level for most practical applications, resulting to a rapidly converging series of \eqref{snrpdf}. This reveals the great success of the MG distribution model. Further, $A_{0}$ is a constant term that defines the pointing loss given by $A_{0}=[{\rm erf}(\sqrt{\pi}r/(\sqrt{2}w_{z}))]^{2}$, while $r$ is the radius of the detection aperture, and $w_{z}$ is the beam waist. In addition, $\xi$ denotes the ratio between the equivalent beam radius at the receiver and the pointing error displacement standard deviation (jitter) at the receiver (i.e., the condition when $\xi\rightarrow +\infty$ reflects the non-pointing error case).

\section{Statistical Analysis}
\label{Statistical Analysis}
We commence by extracting some key statistical results, namely, the CDF of the end-to-end received SNR for various relaying transmission modes. These results are quite useful for the overall performance evaluation of the considered system.

Under the condition $b_{i_{j}}-\xi^{2}_{j} \in \mathbb{N}^{+}$ and using \cite[Eq. (8.352.4)]{tables}, \eqref{snrpdf} reduces to
\begin{align}
f_{\gamma_{j}}(x)=\sum^{L}_{i_{j}=1}\sum^{b_{i_{j}}-\xi^{2}_{j}-1}_{k_{j}=0}\Xi(i_{j})x^{\xi^{2}_{j}+k_{j}-1}\exp\left(-\frac{c_{i_{j}}}{A_{0_{j}}\bar{\gamma}_{j}}x\right),
\label{snrpdf1}
\end{align} 
where
\begin{align}
\Xi(i_{j})\triangleq \frac{a_{i_{j}}c^{\xi^{2}_{j}-b_{i_{j}}+k_{j}}_{i_{j}}\xi^{2}_{j}(b_{i_{j}}-\xi^{2}_{j}-1)!}{k_{j}!(A_{0_{j}}\bar{\gamma}_{j})^{\xi^{2}_{j}+k_{j}}}.
\label{Xi}
\end{align} 

Notice that the restriction $b_{i_{j}}-\xi^{2}_{j} \in \mathbb{N}^{+}$ applies for integer channel fading parameters (or to be more precise, $b_{i_{j}}-\xi^{2}_{j}$ should be an integer value). In the case when at least one of the involved parameters is non-integer, \eqref{snrpdf1} can serve as a close-approximate of \eqref{snrpdf} and/or as a performance benchmark, since it bounds the SNR performance between the nearest upper and lower integer values of the corresponding fading parameter.  

Moreover, in the case when $b_{i_{j}}-\xi^{2}_{j}\leq 0$ (i.e., the misalignment-induced parameter due to the pointing error effect is equal or greater than the turbulence-induced fading parameter), then the following inequality is used
\begin{align}
\nonumber
\Gamma(-j,x)&\triangleq \int^{\infty}_{x}t^{-j-1}\exp(-t)dt\\
\nonumber
&=\exp(-x)\int^{\infty}_{0}(t+x)^{-j-1}\exp(-t)dt\\
&\leq \exp(-x)x^{-j-1},\ j\geq 0,
\label{inequality}
\end{align} 
which gets more tight for low-to-moderate SNR regions.\footnote{The sharpness of this bound is manifested when $x\rightarrow +\infty$, such that $\Gamma(-j,x)\rightarrow \exp(-x)x^{-j-1}$, which reflects low-to-moderate SNR regions according to \eqref{snrpdf}. It is noteworthy that the latter SNR regions are of particular interest in current study, since analytical results only for the asymptotically high SNR region exist so far.} According to the latter inequality, it stems that
\begin{align}
f_{\gamma_{j}}(x)\leq f^{(B)}_{\gamma_{j}}(x),\ x>0,
\label{snrpdfbound}
\end{align}  
where 
\begin{align}
f^{(B)}_{\gamma_{j}}(x)=\sum^{L}_{i_{j}=1}\frac{a_{i_{j}}c^{-1}_{i_{j}}\xi^{2}_{j}}{(A_{0_{j}}\bar{\gamma}_{j})^{b_{i_{j}}-1}}x^{b_{i_{j}}-2}\exp\left(-\frac{c_{i_{j}}}{A_{0_{j}}\bar{\gamma}_{j}}x\right),
\end{align} 
stands for the upper bound of the true PDF $f_{\gamma_{j}}(\cdot)$ and can be directly obtained from \eqref{snrpdf1} by simply setting $k_{j}=b_{i_{j}}-\xi^{2}_{j}-1$. In what follows, we use the general form of \eqref{snrpdf1} and we revisit the tightness of the bound (in the case when $b_{i_{j}}-\xi^{2}_{j}\leq 0$) in the numerical results. 

The CCDF of SNR at the $j^{\rm th}$ transmission hop is given by
\begin{align}
\nonumber
\bar{F}_{\gamma_{j}}(x)&=1-F_{\gamma_{j}}(x)\\
\nonumber
&=\sum^{L}_{i_{j}=1}\sum^{b_{i_{j}}-\xi^{2}_{j}-1}_{k_{j}=0}\sum^{\xi^{2}_{j}+k_{j}-1}_{r_{j}=0}\frac{\Xi(i_{j})(\xi^{2}_{j}+k_{j}-1)!}{r_{j}!\left(\frac{c_{i_{j}}}{A_{0_{j}}\bar{\gamma}_{j}}\right)^{\xi^{2}_{j}+k_{j}-r_{j}}}\\
&\ \ \ \ \times x^{r_{j}} \exp\left(-\frac{c_{i_{j}}}{A_{0_{j}}\bar{\gamma}_{j}}x\right),
\label{ccdfsnr}
\end{align} 

Based on the above statistics, the following lemmas provide the CDF for end-to-end SNR for different types of relayed transmission mode.

\begin{lem}
The CDF of the end-to-end SNR for a dual-hop CSI-assisted relayed transmission is derived by \eqref{cdfafcsi} in a closed-form expression.

\begin{align}
\nonumber
F_{\gamma_{\rm CSI}}(x)&=1-\sum^{L}_{i_{1}=1}\sum^{b_{i_{1}}-\xi^{2}_{1}-1}_{k_{1}=0}\sum^{\xi^{2}_{1}+k_{1}-1}_{r_{1}=0}\sum^{L}_{i_{2}=1}\sum^{b_{i_{2}}-\xi^{2}_{2}-1}_{k_{2}=0}\sum^{\xi^{2}_{2}+k_{2}-1}_{p=0}\sum^{r_{1}}_{s=0}\binom{r_{1}}{s}\binom{\xi^{2}_{2}+k_{2}-1}{p}\\
\nonumber
&\times \frac{2\Xi(i_{1})\Xi(i_{2})(\xi^{2}_{1}+k_{1}-1)!\left(\frac{c_{i_{1}}}{A_{0_{1}}\bar{\gamma}_{1}}\right)^{\frac{p-s+1}{2}+r_{1}-\xi^{2}_{1}-k_{1}}(x^{2}+q x)^{\frac{p+s+1}{2}}}{r_{1}!\left(\frac{c_{i_{2}}}{A_{0_{2}}\bar{\gamma}_{2}}\right)^{\frac{p-s+1}{2}}\exp\left(\left[\frac{c_{i_{1}}}{A_{0_{1}}\bar{\gamma}_{1}}+\frac{c_{i_{2}}}{A_{0_{2}}\bar{\gamma}_{2}}\right]x\right)}\\
&\times x^{\xi^{2}_{2}+k_{2}+r_{1}-p-s-1} K_{p-s+1}\left(2\sqrt{\frac{c_{i_{1}}c_{i_{2}}}{A_{0_{1}}A_{0_{2}}\bar{\gamma}_{1}\bar{\gamma}_{2}}(x^{2}+q x)}\right)
\label{cdfafcsi}
\end{align}
\end{lem}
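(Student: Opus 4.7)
The plan is to derive the complement $\bar{F}_{\gamma_{\rm CSI}}(x)$ first and invert at the end. The inequality $\gamma_{\rm CSI}>x$ is equivalent to $\gamma_{2}>x$ together with $\gamma_{1}>(x\gamma_{2}+qx)/(\gamma_{2}-x)$, so by independence of the two hops I would write
\begin{align*}
\bar{F}_{\gamma_{\rm CSI}}(x)=\int_{x}^{\infty}f_{\gamma_{2}}(y)\,\bar{F}_{\gamma_{1}}\!\left(\frac{xy+qx}{y-x}\right)dy.
\end{align*}
The substitution $y=t+x$, which rationalises the argument of $\bar{F}_{\gamma_{1}}$, yields
\begin{align*}
\bar{F}_{\gamma_{\rm CSI}}(x)=\int_{0}^{\infty}f_{\gamma_{2}}(t+x)\,\bar{F}_{\gamma_{1}}\!\left(x+\frac{x^{2}+qx}{t}\right)dt,
\end{align*}
and from here I would insert the polynomial--exponential representations \eqref{snrpdf1} for $f_{\gamma_{2}}$ and \eqref{ccdfsnr} for $\bar{F}_{\gamma_{1}}$.

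Two binomial expansions then isolate the $t$-dependent part of the integrand. First, expand $(t+x)^{\xi_{2}^{2}+k_{2}-1}=\sum_{p=0}^{\xi_{2}^{2}+k_{2}-1}\binom{\xi_{2}^{2}+k_{2}-1}{p}t^{p}x^{\xi_{2}^{2}+k_{2}-1-p}$, which introduces the index $p$ and the binomial coefficient $\binom{\xi_{2}^{2}+k_{2}-1}{p}$ that appears in the statement. Second, expand $\bigl(x+(x^{2}+qx)/t\bigr)^{r_{1}}=\sum_{s=0}^{r_{1}}\binom{r_{1}}{s}x^{r_{1}-s}(x^{2}+qx)^{s}t^{-s}$, giving the index $s$ and the binomial $\binom{r_{1}}{s}$. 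With $\beta_{j}\triangleq c_{i_{j}}/(A_{0_{j}}\bar{\gamma}_{j})$, the two exponentials combine into $\exp(-[\beta_{1}+\beta_{2}]x)\cdot\exp(-\beta_{2}t-\beta_{1}(x^{2}+qx)/t)$, so the $x$-only factor moves outside the integral and what remains to be computed is
\begin{align*}
\int_{0}^{\infty}t^{p-s}\exp\!\left(-\beta_{2}t-\frac{\beta_{1}(x^{2}+qx)}{t}\right)dt.
\end{align*}

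The key technical step is recognising this as \cite[Eq.\ (3.471.9)]{tables}, namely $\int_{0}^{\infty}t^{\nu-1}e^{-\gamma t-\beta/t}\,dt=2(\beta/\gamma)^{\nu/2}K_{\nu}(2\sqrt{\beta\gamma})$, applied with $\nu=p-s+1$. This single identity simultaneously produces the modified Bessel function $K_{p-s+1}(2\sqrt{\beta_{1}\beta_{2}(x^{2}+qx)})$, the ratio $(\beta_{1}/\beta_{2})^{(p-s+1)/2}$ matching the displayed powers of $c_{i_{j}}/(A_{0_{j}}\bar{\gamma}_{j})$, and an extra $(x^{2}+qx)^{(p-s+1)/2}$ that, combined with the $(x^{2}+qx)^{s}$ already carried outside, gives the exponent $(p+s+1)/2$ of \eqref{cdfafcsi}; the $x$-power $\xi_{2}^{2}+k_{2}+r_{1}-p-s-1$ follows from collecting $x^{\xi_{2}^{2}+k_{2}-1-p}\cdot x^{r_{1}-s}$. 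The main obstacle is not conceptual but bookkeeping: one must keep the factorials coming from \eqref{ccdfsnr}, the two binomial coefficients, and the signed powers of $\beta_{j}$ aligned so that they collapse exactly into the constants shown in \eqref{cdfafcsi}. Finally, writing $F_{\gamma_{\rm CSI}}(x)=1-\bar{F}_{\gamma_{\rm CSI}}(x)$ yields the ``$1-\cdots$'' structure of the claim, valid uniformly for both the approximate $(q=0)$ and exact $(q=1)$ CSI-assisted AF cases.
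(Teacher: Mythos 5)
Your proposal is correct and follows essentially the same route as the paper: you arrive at the integral representation $\bar{F}_{\gamma_{\rm CSI}}(x)=\int_{0}^{\infty}\bar{F}_{\gamma_{1}}\bigl(x+\tfrac{x^{2}+qx}{t}\bigr)f_{\gamma_{2}}(t+x)\,dt$ (which the paper simply quotes from the literature, whereas you derive it), then apply the binomial theorem \cite[Eq.\ (1.111)]{tables} and the Bessel integral \cite[Eq.\ (3.471.9)]{tables} exactly as the paper does. All exponents and prefactors you track ($(x^{2}+qx)^{(p+s+1)/2}$, the powers of $c_{i_{j}}/(A_{0_{j}}\bar{\gamma}_{j})$, and $x^{\xi_{2}^{2}+k_{2}+r_{1}-p-s-1}$) match \eqref{cdfafcsi}.
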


\begin{proof}
It holds that \cite{tsiftsis2006nonregenerative} 
\begin{align}
F_{\gamma_{\rm CSI}}(x)\triangleq 1-\int^{\infty}_{0}\bar{F}_{\gamma_{1}}\left(x+\frac{x^{2}+q x}{y}\right)f_{\gamma_{2}}(x+y)dy.
\label{cdfcsiii}
\end{align}
Hence, inserting \eqref{ccdfsnr} and \eqref{snrpdf1} into \eqref{cdfcsiii}, utilizing \cite[Eqs. (1.111) and (3.471.9)]{tables}, and after performing some straightforward manipulations, (\ref{cdfafcsi}) is obtained.
\end{proof}

\begin{lem}
The CDF of the end-to-end SNR for a dual-hop relayed transmission with a fixed gain is provided as
\begin{align}
\nonumber
F_{\gamma_{\rm Fixed}}(x)&=1-\sum^{L}_{i_{1}=1}\sum^{b_{i_{1}}-\xi^{2}_{1}-1}_{k_{1}=0}\sum^{\xi^{2}_{1}+k_{1}-1}_{r_{1}=0}\sum^{L}_{i_{2}=1}\sum^{b_{i_{2}}-\xi^{2}_{2}-1}_{k_{2}=0}\sum^{r_{1}}_{s=0}\binom{r_{1}}{s}\Xi(i_{1})\Xi(i_{2})\\
\nonumber
&\times \frac{2(\xi^{2}_{1}+k_{1}-1)!\left(\frac{c_{i_{1}}}{A_{0_{1}}\bar{\gamma}_{1}}\right)^{\frac{\xi^{2}_{2}+k_{2}-s}{2}+r_{1}-\xi^{2}_{1}-k_{1}}U^{\frac{\xi^{2}_{2}+k_{2}+s}{2}}x^{\frac{\xi^{2}_{2}+k_{2}-s}{2}+r_{1}}}{r_{1}!\left(\frac{c_{i_{2}}}{A_{0_{2}}\bar{\gamma}_{2}}\right)^{\frac{\xi^{2}_{2}+k_{2}-s}{2}}\exp\left(\frac{c_{i_{1}}}{A_{0_{1}}\bar{\gamma}_{1}}x\right)}\\
&\times K_{\xi^{2}_{2}+k_{2}-s}\left(2\sqrt{\frac{c_{i_{1}}c_{i_{2}}}{A_{0_{1}}A_{0_{2}}\bar{\gamma}_{1}\bar{\gamma}_{2}}U x}\right),
\label{cdfaffixed}
\end{align}
where 
\begin{align}
\nonumber
U\triangleq \left(\mathbb{E}\left[\frac{1}{\gamma_{1}+1}\right]\right)^{-1}=\Bigg[&\sum^{L}_{i_{1}=1}\sum^{b_{i_{1}}-\xi^{2}_{1}-1}_{k_{1}=0}\frac{\Xi(i_{1})\Gamma(\xi^{2}_{1}+k_{1})}{\exp\left(-\frac{c_{i_{1}}}{A_{0_{1}}\bar{\gamma}_{1}}\right)}\\
&\times \Gamma\left(1-\xi^{2}_{1}-k_{1},\frac{c_{i_{1}}}{A_{0_{1}}\bar{\gamma}_{1}}\right)\Bigg]^{-1}.
\label{ugain}
\end{align} 
\end{lem}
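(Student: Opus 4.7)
The plan is to mirror the derivation used for Lemma~1, but with the fixed-gain conditional structure. The starting point is the observation that for fixed-gain AF relaying, conditioning on $\gamma_2 = y$ yields the equivalence $\{\gamma_{\rm Fixed} > x\} \iff \{\gamma_1 > x + Ux/y\}$, so that
\begin{align}
\nonumber
F_{\gamma_{\rm Fixed}}(x) = 1 - \int_0^\infty \bar{F}_{\gamma_1}\!\left(x + \frac{U x}{y}\right) f_{\gamma_2}(y)\, dy.
\end{align}
This is the direct analogue of \eqref{cdfcsiii} with the offset $(x^2 + qx)/y$ replaced by $Ux/y$ and without the extra shift in $f_{\gamma_2}$.

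Next, I would substitute \eqref{ccdfsnr} for $\bar{F}_{\gamma_1}$ and \eqref{snrpdf1} for $f_{\gamma_2}$. The factor $(x + Ux/y)^{r_1}$ that appears inside $\bar{F}_{\gamma_1}$ is expanded via the binomial theorem \cite[Eq.~(1.111)]{tables}, producing a sum over $s = 0, \ldots, r_1$ of terms proportional to $x^{r_1 - s} (Ux)^s y^{-s}$. After pulling all $x$- and $y$-independent constants out, the remaining $y$-integral has the canonical form
\begin{align}
\nonumber
\int_0^\infty y^{\alpha - 1} \exp\!\left(-\frac{\beta}{y} - \lambda y\right) dy = 2\left(\frac{\beta}{\lambda}\right)^{\alpha/2} K_\alpha\!\left(2\sqrt{\beta\lambda}\right),
\end{align}
with $\alpha = \xi_2^2 + k_2 - s$, $\beta = (c_{i_1}/(A_{0_1}\bar{\gamma}_1)) U x$ coming from the exponent of $\bar{F}_{\gamma_1}$, and $\lambda = c_{i_2}/(A_{0_2}\bar{\gamma}_2)$ coming from $f_{\gamma_2}$; this is precisely \cite[Eq.~(3.471.9)]{tables}. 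Collecting the resulting exponents of $c_{i_1}/(A_{0_1}\bar{\gamma}_1)$, $c_{i_2}/(A_{0_2}\bar{\gamma}_2)$, $U$, and $x$ reproduces the exponents shown in \eqref{cdfaffixed}; the residual exponential $\exp(-c_{i_1} x/(A_{0_1}\bar{\gamma}_1))$ is the only piece of the original exponent that does not get absorbed into the Bessel argument.

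For the fixed gain $U$, I would start from $U = (\mathbb{E}[1/(\gamma_1+1)])^{-1}$, plug in \eqref{snrpdf1}, and evaluate each inner integral $\int_0^\infty x^{\xi_1^2 + k_1 - 1}(x+1)^{-1} \exp(-c_{i_1} x/(A_{0_1}\bar{\gamma}_1)) dx$ using the standard identity $\int_0^\infty t^{\nu-1}(t+1)^{-1} e^{-\mu t} dt = \Gamma(\nu)\,e^{\mu}\,\Gamma(1-\nu,\mu)$, which gives the two Gamma-function factors in \eqref{ugain}; taking the reciprocal yields the displayed closed form.

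The main obstacle is bookkeeping rather than any single analytic difficulty: one has to track the exponents of $c_{i_1}/(A_{0_1}\bar{\gamma}_1)$, $c_{i_2}/(A_{0_2}\bar{\gamma}_2)$, $U$, and $x$ through the binomial expansion and the $(\beta/\lambda)^{\alpha/2}$ factor produced by \cite[Eq.~(3.471.9)]{tables}, and confirm that the sign/order of the Bessel index matches the claimed $K_{\xi_2^2+k_2-s}$. Since $K_\nu = K_{-\nu}$, the absolute value of the index is what ultimately matters, which gives some leeway in verifying the final expression. Once these exponents are reconciled, \eqref{cdfaffixed} follows directly, with no additional integration or identity required.
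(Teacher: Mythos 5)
Your proposal is correct and follows essentially the same route as the paper: the paper starts from $F_{\gamma_{\rm Fixed}}(x)=\int_{0}^{\infty}F_{\gamma_{1}}\left(x+\tfrac{Ux}{y}\right)f_{\gamma_{2}}(y)\,dy$ (trivially equivalent to your complementary form since $f_{\gamma_{2}}$ integrates to one), then applies the binomial expansion and the Bessel-type integral \cite[Eq.~(3.471.9)]{tables} exactly as you describe, and obtains \eqref{ugain} from \cite[Eq.~(3.383.10)]{tables}, which is precisely the identity $\int_{0}^{\infty}t^{\nu-1}(t+1)^{-1}e^{-\mu t}\,dt=\Gamma(\nu)e^{\mu}\Gamma(1-\nu,\mu)$ you invoke. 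Your exponent bookkeeping also checks out against \eqref{cdfaffixed}.
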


\begin{proof}
It holds that \cite{tsiftsis2006nonregenerative}
\begin{align}
F_{\gamma_{\rm Fixed}}(x)\triangleq \int^{\infty}_{0}F_{\gamma_{1}}\left(x+\frac{U x}{y}\right)f_{\gamma_{2}}(y)dy,
\label{cdffixed}
\end{align}
Hence, following the same steps as for deriving (\ref{cdfafcsi}), yields (\ref{cdfaffixed}). Also, regarding the derivation of \eqref{ugain}, the fixed AF gain is defined as \cite{j:HasnaAlouini2004} $U\triangleq (\int^{\infty}_{0}(x+1)^{-1}f_{\gamma_{1}}(x)dx)^{-1}$. Hence, using \cite[Eq. (3.383.10)]{tables}, \eqref{ugain} is directly extracted.
\end{proof}

\begin{lem}
The CDF of the end-to-end SNR for a DF relayed transmission is expressed as
\begin{align}
F_{\gamma_{\rm DF}}(x)\triangleq F_{\min\{\gamma_{1},\gamma_{2}\}}(x)=1-\prod^{2}_{j=1}\bar{F}_{\gamma_{j}}(x).
\label{cdfdf}
\end{align} 
\end{lem}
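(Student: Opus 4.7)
The plan is to derive \eqref{cdfdf} directly from the definition of the minimum of two independent random variables, exploiting the fact that $\gamma_1$ and $\gamma_2$ are statistically independent (as they correspond to the received SNRs on two physically distinct hops subject to independent fading realizations). The strategy relies entirely on elementary probability manipulations, together with the already established closed form of the marginal CCDFs in \eqref{ccdfsnr}.

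First, I would write the event $\{\min\{\gamma_1,\gamma_2\}\leq x\}$ as the complement of $\{\min\{\gamma_1,\gamma_2\}> x\}$, so that
\begin{align}
\nonumber
F_{\gamma_{\rm DF}}(x)=\Pr[\min\{\gamma_1,\gamma_2\}\leq x]=1-\Pr[\min\{\gamma_1,\gamma_2\}> x].
\end{align}
Next, I would use the equivalence $\{\min\{\gamma_1,\gamma_2\}>x\}=\{\gamma_1>x\}\cap\{\gamma_2>x\}$, and invoke the independence of $\gamma_1$ and $\gamma_2$ to factor the joint probability as a product of marginals, yielding $\prod_{j=1}^{2}\Pr[\gamma_j>x]=\prod_{j=1}^{2}\bar{F}_{\gamma_j}(x)$. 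Substituting this back gives precisely \eqref{cdfdf}.

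The last step would be to make the expression fully explicit by plugging the closed-form CCDF derived in \eqref{ccdfsnr} for each hop $j\in\{1,2\}$, so that the final result is a double product of finite nested sums of elementary (polynomial times exponential) terms. No special function identities or integral tables are needed here, in contrast to Lemmas~1 and 2.

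There is essentially no hard step: the only subtlety worth flagging is the independence assumption on $\gamma_1,\gamma_2$, which is standard for dual-hop FSO relay models and is implicit in the system description of Section~\ref{System Model}. The proof is therefore short and self-contained once \eqref{ccdfsnr} is available.
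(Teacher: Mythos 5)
Your argument is correct and is exactly the standard derivation the paper implicitly relies on (the paper states Lemma~3 without an explicit proof, treating the independence-based factorization of $\Pr[\min\{\gamma_1,\gamma_2\}>x]$ as immediate). Your explicit flagging of the independence assumption and the substitution of \eqref{ccdfsnr} adds nothing contradictory and matches the paper's intent.
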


\section{System Performance}
\label{System Performance}

Capitalizing on the previously derived statistics, some useful metrics that define the overall system performance are analytically evaluated.

\subsection{Outage Probability}
The outage probability is defined as the probability that the SNR falls below a certain threshold value, $\gamma_{\text{th}}$, such that $P_{\rm out}(\gamma_{\rm th})=F_{\gamma}(\gamma_{\rm th})$. With the aid of the closed-form expressions of the previously derived lemmas, the system outage performance can be easily computed for each relayed transmission mode, correspondingly.

\subsection{Average Bit-Error Rate}
The ABER is defined as \cite{j:AnsariYilmaz2016}
\begin{align}
\overline{P}_{b}\triangleq \frac{Q^{P}}{2 \Gamma(P)}\int^{\infty}_{0}z^{P-1}\exp\left(-Q z\right)F_{\gamma}(z)dz,
\label{aberdef}
\end{align}
where $P$ and $Q$ are fixed modulation-specific parameters. For the CSI-assisted AF relaying scenario, the exact ABER is analytically intractable. Yet, the case when $q=0$ can serve as a sharp approximation in moderate-to-high SNR.

\begin{prop}
The ABER of the end-to-end SNR for a dual-hop CSI-assisted relayed transmission, when $q=0$, is presented as
\begin{align}
\nonumber
&\bar{P_{b}}_{\gamma_{\rm CSI}}=\scriptstyle \frac{1}{2}-\sum^{L}_{i_{1}=1}\sum^{b_{i_{1}}-\xi^{2}_{1}-1}_{k_{1}=0}\sum^{\xi^{2}_{1}+k_{1}-1}_{r_{1}=0}\sum^{L}_{i_{2}=1}\sum^{b_{i_{2}}-\xi^{2}_{2}-1}_{k_{2}=0}\sum^{\xi^{2}_{2}+k_{2}-1}_{p=0}\sum^{r_{1}}_{s=0}\binom{r_{1}}{s}\binom{\xi^{2}_{2}+k_{2}-1}{p}\\
\nonumber
&\times \scriptstyle  \frac{\Xi(i_{1})\Xi(i_{2})(\xi^{2}_{1}+k_{1}-1)!\sqrt{\pi}4^{p-s+1}\left(\frac{c_{i_{1}}}{A_{0_{1}}\bar{\gamma}_{1}}\right)^{p-s+r_{1}-\xi^{2}_{1}-k_{1}+1}\Gamma(\xi^{2}_{2}+k_{2}+r_{1}+p-s+P+1)\Gamma(\xi^{2}_{2}+k_{2}+r_{1}-p+s+P-1)}{r_{1}!\left(\frac{c_{i_{1}}}{A_{0_{1}}\bar{\gamma}_{1}}+\frac{c_{i_{2}}}{A_{0_{2}}\bar{\gamma}_{2}}+Q+2\sqrt{\frac{c_{i_{1}}c_{i_{2}}}{A_{0_{1}}A_{0_{2}}\bar{\gamma}_{1}\bar{\gamma}_{2}}}\right)^{\xi^{2}_{2}+k_{2}+r_{1}+p-s+P+1}\Gamma\left(\xi^{2}_{2}+k_{2}+r_{1}+P+\frac{1}{2}\right)}\\
&\times \scriptstyle {}_2F_1\left(\xi^{2}_{2}+k_{2}+r_{1}+p-s+P+1,p-s+\frac{3}{2};\xi^{2}_{2}+k_{2}+r_{1}+P+\frac{1}{2};\frac{\frac{c_{i_{1}}}{A_{0_{1}}\bar{\gamma}_{1}}+\frac{c_{i_{2}}}{A_{0_{2}}\bar{\gamma}_{2}}+Q-2\sqrt{\frac{c_{i_{1}}c_{i_{2}}}{A_{0_{1}}A_{0_{2}}\bar{\gamma}_{1}\bar{\gamma}_{2}}}}{\frac{c_{i_{1}}}{A_{0_{1}}\bar{\gamma}_{1}}+\frac{c_{i_{2}}}{A_{0_{2}}\bar{\gamma}_{2}}+Q+2\sqrt{\frac{c_{i_{1}}c_{i_{2}}}{A_{0_{1}}A_{0_{2}}\bar{\gamma}_{1}\bar{\gamma}_{2}}}}\right)
\label{aberafcsi}
\end{align}
\end{prop}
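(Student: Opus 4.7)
The plan is to substitute the CDF from Lemma~1 (specialized to $q=0$) into the ABER definition \eqref{aberdef} and reduce the resulting integral to a standard tabulated form. First I would set $q=0$ in \eqref{cdfafcsi}: this replaces $x^{2}+qx$ by $x^{2}$, so that $\sqrt{x^{2}+qx}$ collapses to $x$ inside the Bessel function argument, while the factor $(x^{2}+qx)^{(p+s+1)/2}$ becomes $x^{p+s+1}$. Combining with the explicit $x^{\xi^{2}_{2}+k_{2}+r_{1}-p-s-1}$ already present, each summand of $1-F_{\gamma_{\rm CSI}}(x)$ collapses to the clean shape $(\text{constant})\cdot x^{\xi^{2}_{2}+k_{2}+r_{1}}\exp(-\alpha x)\,K_{p-s+1}(\beta x)$, where I abbreviate $\alpha\triangleq \tfrac{c_{i_{1}}}{A_{0_{1}}\bar{\gamma}_{1}}+\tfrac{c_{i_{2}}}{A_{0_{2}}\bar{\gamma}_{2}}$ and $\beta\triangleq 2\sqrt{\tfrac{c_{i_{1}}c_{i_{2}}}{A_{0_{1}}A_{0_{2}}\bar{\gamma}_{1}\bar{\gamma}_{2}}}$.

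Next I would split the ABER integral using $F_{\gamma_{\rm CSI}}=1-(1-F_{\gamma_{\rm CSI}})$. The trivial ``$1$'' piece yields $\int_{0}^{\infty}z^{P-1}e^{-Qz}dz=\Gamma(P)/Q^{P}$, which after multiplication by the prefactor $Q^{P}/(2\Gamma(P))$ contributes exactly the $\tfrac{1}{2}$ in \eqref{aberafcsi}. What remains, for each multi-index in the six nested sums, is an integral of the form $\int_{0}^{\infty}z^{\mu-1}\exp(-(\alpha+Q)z)\,K_{p-s+1}(\beta z)\,dz$ with $\mu\triangleq \xi^{2}_{2}+k_{2}+r_{1}+P$ and order $\nu=p-s+1$.

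The pivotal step is the closed-form evaluation of this integral via the Bessel-exponential identity \cite[Eq.~(6.621.3)]{tables}, which returns a Gauss hypergeometric ${}_{2}F_{1}$ with argument $(\alpha+Q-\beta)/(\alpha+Q+\beta)$ and parameters fixed by $\mu$ and $\nu$. Matching these against \eqref{aberafcsi} identifies the two numerator Gamma factors $\Gamma(\mu+\nu)$ and $\Gamma(\mu-\nu)$, the denominator $\Gamma(\mu+\tfrac{1}{2})$, and the power $(\alpha+Q+\beta)^{\mu+\nu}$. The factor $(2\beta)^{\nu}$ produced by \cite[Eq.~(6.621.3)]{tables} absorbs the leftover half-integer powers of $c_{i_{1}}/(A_{0_{1}}\bar{\gamma}_{1})$ and $c_{i_{2}}/(A_{0_{2}}\bar{\gamma}_{2})$ carried over from \eqref{cdfafcsi}; the $c_{i_{2}}$-ratio powers cancel, the $c_{i_{1}}$-ratio consolidates to the stated exponent $p-s+r_{1}-\xi^{2}_{1}-k_{1}+1$, and the pure numerical part collects as $\sqrt{\pi}\,4^{p-s+1}$.

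The hardest part is not analytic but combinatorial: maintaining the six nested indices while consolidating the fractional powers of the Bessel argument into integer powers of $c_{i_{1}},c_{i_{2}}$, and checking that the applicability conditions of \cite[Eq.~(6.621.3)]{tables} (essentially $\mu>|\nu|$, which here reduces to $r_{1}+P>0$ after using $p\leq \xi^{2}_{2}+k_{2}-1$ and $s\geq 0$) hold uniformly across the summation range. Once this bookkeeping is discharged, no further analytic machinery beyond the single tabulated integral is required, and \eqref{aberafcsi} follows.
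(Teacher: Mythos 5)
Your proposal follows exactly the route of the paper's own proof: set $q=0$ in \eqref{cdfafcsi} so that the Bessel argument and the power of $x$ collapse to $x^{\xi^{2}_{2}+k_{2}+r_{1}}e^{-\alpha x}K_{p-s+1}(\beta x)$, substitute into \eqref{aberdef} (the ``$1$'' piece giving the $\tfrac{1}{2}$), and evaluate the remaining integral with \cite[Eq.~(6.621.3)]{tables}. The parameter matching, the cancellation of the half-integer powers against the $(2\beta)^{\nu}$ factor, and the validity check $\mu>|\nu|$ are all handled correctly, so this is the same proof, just written out in more detail than the paper provides.
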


\begin{proof}
The desired expression directly arises by setting $q=0$ in \eqref{cdfafcsi}, using \eqref{aberdef}, and utilizing \cite[Eq. (6.621.3)]{tables}. 
\end{proof}

\begin{prop}
The ABER of the end-to-end SNR for a dual-hop relayed transmission with a fixed gain is derived in a closed-form expression as 
\begin{align}
\nonumber
&\bar{P_{b}}_{\gamma_{\rm Fixed}}=\scriptstyle \frac{1}{2}-\sum^{L}_{i_{1}=1}\sum^{b_{i_{1}}-\xi^{2}_{1}-1}_{k_{1}=0}\sum^{\xi^{2}_{1}+k_{1}-1}_{r_{1}=0}\sum^{L}_{i_{2}=1}\sum^{b_{i_{2}}-\xi^{2}_{2}-1}_{k_{2}=0}\sum^{r_{1}}_{s=0}\binom{r_{1}}{s}\Xi(i_{1})\Xi(i_{2})\\
\nonumber
&\times \scriptstyle \frac{(\xi^{2}_{1}+k_{1}-1)!\left(\frac{c_{i_{1}}}{A_{0_{1}}\bar{\gamma}_{1}}\right)^{\frac{\xi^{2}_{2}+k_{2}-s-1}{2}+r_{1}-\xi^{2}_{1}-k_{1}}U^{\frac{\xi^{2}_{2}+k_{2}+s}{2}}Q^{P}\Gamma(\xi^{2}_{2}+k_{2}-s+r_{1}+P)\Gamma(r_{1}+P)}{2 r_{1}!\Gamma(P)\left(\frac{c_{i_{2}}}{A_{0_{2}}\bar{\gamma}_{2}}\right)^{\frac{\xi^{2}_{2}+k_{2}-s+1}{2}}\left(\frac{c_{i_{1}}}{A_{0_{1}}\bar{\gamma}_{1}}+Q\right)^{\frac{\xi^{2}_{2}+k_{2}-s+2r_{1}+2P-1}{2}}\exp\left(-\frac{c_{i_{1}}c_{i_{2}}}{2 A_{0_{1}}A_{0_{2}}\bar{\gamma}_{1}\bar{\gamma}_{2}\left(\frac{c_{i_{1}}}{A_{0_{1}}\bar{\gamma}_{1}}+Q\right)}\right)}\\
&\times \scriptstyle W_{-\frac{(\xi^{2}_{2}+k_{2}-s+2 r_{1}+2 P-1)}{2},\frac{\xi^{2}_{2}+k_{2}-s}{2}}\left(\frac{c_{i_{1}}c_{i_{2}}}{A_{0_{1}}A_{0_{2}}\bar{\gamma}_{1}\bar{\gamma}_{2}\left(\frac{c_{i_{1}}}{A_{0_{1}}\bar{\gamma}_{1}}+Q\right)}\right)
\label{aberaffixed}
\end{align}
\end{prop}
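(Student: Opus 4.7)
My plan is to substitute the closed-form CDF from Lemma~2 into the ABER integral \eqref{aberdef} and to reduce the $z$-dependent part to a single modified-Bessel-type integral that admits a Whittaker-function evaluation. First I would split $F_{\gamma_{\rm Fixed}}(z) = 1 - \Sigma(z)$, where $\Sigma(z)$ gathers every summation term of \eqref{cdfaffixed}. The constant ``$1$'' contributes
\begin{equation*}
\frac{Q^{P}}{2\Gamma(P)}\int_{0}^{\infty} z^{P-1}\exp(-Qz)\,dz = \frac{1}{2},
\end{equation*}
which produces the leading $1/2$ in \eqref{aberaffixed}. All subsequent effort is devoted to the $-\Sigma(z)$ part.

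Next I would interchange the (finite) multiple summations with the integration, since all index ranges are finite. After merging $\exp(-Qz)$ from \eqref{aberdef} with $\exp\!\left(-\tfrac{c_{i_{1}}}{A_{0_{1}}\bar{\gamma}_{1}}z\right)$ from \eqref{cdfaffixed} into a single $\exp(-\alpha z)$ with $\alpha \triangleq \tfrac{c_{i_{1}}}{A_{0_{1}}\bar{\gamma}_{1}}+Q$, the $z$-dependence of every summand collapses to an integral of the form
\begin{equation*}
\int_{0}^{\infty} z^{\,r_{1}+P-1+\frac{\xi_{2}^{2}+k_{2}-s}{2}}\, e^{-\alpha z}\, K_{\xi_{2}^{2}+k_{2}-s}\!\left(2\sqrt{\tfrac{c_{i_{1}}c_{i_{2}}U}{A_{0_{1}}A_{0_{2}}\bar{\gamma}_{1}\bar{\gamma}_{2}}\,z}\right) dz.
\end{equation*}
This matches the closed-form template
\begin{equation*}
\int_{0}^{\infty} x^{\mu-1/2}\, e^{-\alpha x}\, K_{2\nu}(2\beta\sqrt{x})\,dx = \frac{\Gamma(\mu+\nu+\tfrac{1}{2})\,\Gamma(\mu-\nu+\tfrac{1}{2})}{2\beta\,\alpha^{\mu+1/2}}\, e^{\beta^{2}/(2\alpha)}\, W_{-\mu,\nu}\!\left(\beta^{2}/\alpha\right),
\end{equation*}
which is a standard evaluation available, e.g., from \cite[Eq.~(6.643.3)]{tables}. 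The identifications $\mu = r_{1}+P+\tfrac{\xi_{2}^{2}+k_{2}-s-1}{2}$, $\nu = \tfrac{\xi_{2}^{2}+k_{2}-s}{2}$, and $\beta = \sqrt{c_{i_{1}}c_{i_{2}}U/(A_{0_{1}}A_{0_{2}}\bar{\gamma}_{1}\bar{\gamma}_{2})}$ give $\mu+\nu+\tfrac{1}{2}=\xi_{2}^{2}+k_{2}-s+r_{1}+P$ and $\mu-\nu+\tfrac{1}{2}=r_{1}+P$, reproducing the two Gamma factors in \eqref{aberaffixed}, while $\beta^{2}/\alpha$ simultaneously matches both the argument of the Whittaker function and that of the inner exponential.

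The main obstacle is the purely algebraic bookkeeping. The factor $1/(2\beta)$ produced by the integral formula shifts the exponents of $U$, $c_{i_{1}}/(A_{0_{1}}\bar{\gamma}_{1})$, and $c_{i_{2}}/(A_{0_{2}}\bar{\gamma}_{2})$ by $-1/2$ each relative to the CDF prefactor; merging this with the existing exponents in \eqref{cdfaffixed} and with the overall modulation prefactor $Q^{P}/(2\Gamma(P))$ inherited from \eqref{aberdef} is what reproduces the precise exponents $\tfrac{\xi_{2}^{2}+k_{2}-s-1}{2}+r_{1}-\xi_{1}^{2}-k_{1}$, $-\tfrac{\xi_{2}^{2}+k_{2}-s+1}{2}$, and $-\tfrac{\xi_{2}^{2}+k_{2}-s+2r_{1}+2P-1}{2}$ appearing in \eqref{aberaffixed}. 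No non-trivial analytic step remains once the Whittaker evaluation has been invoked; the rest is direct simplification.
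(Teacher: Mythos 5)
Your route is essentially the paper's own: the authors likewise insert \eqref{cdfaffixed} into \eqref{aberdef} and evaluate the surviving Bessel-type integral from the tables, reaching it via the change of variables $\sqrt{x}\rightarrow x$ followed by \cite[Eq.~(6.631.3)]{tables}, whereas you invoke \cite[Eq.~(6.643.3)]{tables} directly; the two table entries are the same identity up to that substitution, so nothing of substance differs. Two details in your write-up do need repair. First, the template you quote carries $\alpha^{-(\mu+1/2)}$, but the correct entry has $\alpha^{-\mu}$; with your identification $\mu=r_{1}+P+\tfrac{\xi_{2}^{2}+k_{2}-s-1}{2}$ it is precisely $\alpha^{-\mu}$ that produces the exponent $-\tfrac{\xi_{2}^{2}+k_{2}-s+2r_{1}+2P-1}{2}$ you claim to recover, so your stated formula and your stated conclusion disagree by a factor $\alpha^{1/2}$. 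Second, with $\beta^{2}=c_{i_{1}}c_{i_{2}}U/(A_{0_{1}}A_{0_{2}}\bar{\gamma}_{1}\bar{\gamma}_{2})$ the quantity $\beta^{2}/\alpha$ does \emph{not} ``simultaneously match'' the Whittaker and exponential arguments printed in \eqref{aberaffixed}: those arguments lack the factor $U$, and the $1/(2\beta)$ prefactor should likewise lower the exponent of $U$ to $\tfrac{\xi_{2}^{2}+k_{2}+s-1}{2}$ rather than leave it at $\tfrac{\xi_{2}^{2}+k_{2}+s}{2}$. Your derivation is the sound one here --- the mismatch points to the $U$ having been dropped when the table entry was applied in the displayed proposition --- but you should flag the discrepancy explicitly instead of asserting a match that does not hold as printed.
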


\begin{proof}
The closed-form ABER directly arises by inserting \eqref{cdfaffixed} into \eqref{aberdef}, implementing a change of variables $\sqrt{x}\rightarrow x$ in the resultant expression and utilizing \cite[Eq. (6.631.3)]{tables}. 
\end{proof}

\begin{prop}
The ABER of the end-to-end SNR for a dual-hop DF relayed transmission is presented as
\begin{align}
\nonumber
\bar{P_{b}}_{\gamma_{\rm DF}}=&\frac{1}{2}-\sum^{L}_{i_{1}=1}\sum^{b_{i_{1}}-\xi^{2}_{1}-1}_{k_{1}=0}\sum^{\xi^{2}_{1}+k_{1}-1}_{r_{1}=0}\sum^{L}_{i_{2}=1}\sum^{b_{i_{2}}-\xi^{2}_{2}-1}_{k_{2}=0}\sum^{\xi^{2}_{2}+k_{2}-1}_{r_{2}=0}\prod^{2}_{j=1}\left[\frac{\Xi(i_{j})(\xi^{2}_{j}+k_{j}-1)!}{r_{j}!\left(\frac{c_{i_{j}}}{A_{0_{j}}\bar{\gamma}_{j}}\right)^{\xi^{2}_{j}+k_{j}-r_{j}}}\right] \\
&\times \frac{Q^{P}\Gamma(r_{1}+r_{2}+P)}{2 \Gamma(P)\left(\frac{c_{i_{1}}}{A_{0_{1}}\bar{\gamma}_{1}}+\frac{c_{i_{2}}}{A_{0_{2}}\bar{\gamma}_{2}}+Q\right)^{r_{1}+r_{2}+P}}
\label{aberdf}
\end{align} 
\end{prop}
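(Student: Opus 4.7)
The plan is to insert the DF end-to-end CDF from Lemma~3 directly into the ABER definition \eqref{aberdef} and reduce everything to a single elementary Gamma-function integral, exploiting the fact that the CCDF in \eqref{ccdfsnr} is already a finite sum of monomial-times-exponential terms.

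First I would substitute $F_{\gamma_{\rm DF}}(z)=1-\prod_{j=1}^{2}\bar{F}_{\gamma_{j}}(z)$ into $\overline{P}_{b}=\frac{Q^{P}}{2\Gamma(P)}\int_{0}^{\infty}z^{P-1}e^{-Qz}F_{\gamma_{\rm DF}}(z)\,dz$ and split the integral into two pieces. The constant ``$1$'' contribution integrates via $\int_{0}^{\infty}z^{P-1}e^{-Qz}dz=\Gamma(P)/Q^{P}$, which, once multiplied by the prefactor $Q^{P}/(2\Gamma(P))$, collapses to the $\tfrac{1}{2}$ term standing outside the summation in \eqref{aberdf}.

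Next I would expand the product $\prod_{j=1}^{2}\bar{F}_{\gamma_{j}}(z)$ using \eqref{ccdfsnr}. Because each CCDF is itself a triple finite sum of terms of the form (constant)$\,\cdot z^{r_{j}}\exp(-c_{i_{j}}z/(A_{0_{j}}\bar{\gamma}_{j}))$, their product is a six-fold nested sum whose generic summand is proportional to $z^{r_{1}+r_{2}}\exp\!\bigl(-[c_{i_{1}}/(A_{0_{1}}\bar{\gamma}_{1})+c_{i_{2}}/(A_{0_{2}}\bar{\gamma}_{2})]z\bigr)$, with the $j$-dependent constants packaged exactly as the $\prod_{j=1}^{2}[\,\cdot\,]$ block in \eqref{aberdf}. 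Interchanging the (finite) sums with the integral leaves, for each choice of the six indices, the single integral $\int_{0}^{\infty}z^{r_{1}+r_{2}+P-1}\exp\!\bigl(-[c_{i_{1}}/(A_{0_{1}}\bar{\gamma}_{1})+c_{i_{2}}/(A_{0_{2}}\bar{\gamma}_{2})+Q]z\bigr)dz$.

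This integral is evaluated by the basic identity $\int_{0}^{\infty}z^{m-1}e^{-\alpha z}dz=\Gamma(m)/\alpha^{m}$ with $m=r_{1}+r_{2}+P$ and $\alpha=c_{i_{1}}/(A_{0_{1}}\bar{\gamma}_{1})+c_{i_{2}}/(A_{0_{2}}\bar{\gamma}_{2})+Q$, producing precisely the $\Gamma(r_{1}+r_{2}+P)/[\,\cdot\,]^{r_{1}+r_{2}+P}$ factor of \eqref{aberdf}; restoring the $Q^{P}/(2\Gamma(P))$ prefactor completes the match. I do not expect any genuine obstacle: the DF case is the simplest of the three relaying protocols, because $\min(\gamma_{1},\gamma_{2})$ factorises the CCDF across the two hops (which are independent by construction of the channel model) and the per-hop CCDF is already polynomial-times-exponential, so no special function beyond $\Gamma$ is needed. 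The only mild care is bookkeeping the six summation indices consistently and gathering the per-hop factors into the displayed $\prod_{j=1}^{2}$ form.
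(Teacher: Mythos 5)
Your proposal is correct and follows exactly the route the paper takes: insert the DF CDF \eqref{cdfdf} into the ABER definition \eqref{aberdef}, extract the $\tfrac12$ from the constant term, and evaluate each summand of the six-fold expansion of $\prod_{j=1}^{2}\bar{F}_{\gamma_{j}}(z)$ via $\int_{0}^{\infty}z^{m-1}e^{-\alpha z}dz=\Gamma(m)/\alpha^{m}$. The paper states this only as a one-line remark, so your write-up simply supplies the bookkeeping the authors omitted; no discrepancy.
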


\begin{proof}
The desired result trivially follows by evaluating the integral that arises by inserting \eqref{cdfdf} into \eqref{aberdef}. 
\end{proof}

\section{Numerical Results}
\label{Numerical Results}
In this Section, the accuracy of the proposed approach is numerically verified. For the sake of clarity and without loss of generality, the analytical results are cross-compared with numerical results modeled by simulating Gamma-Gamma faded channels. Hence, $\alpha_{j}$ and $\beta_{j}$ denote the small- and large-scale channel fading parameters of the $j^{\rm th}$ transmission hop, respectively. Recall that $b_{i_{j}}=\min\{\alpha_{j},\beta_{j}\}$. Also, $L=10$, $r/w_{z}=0.1$, and $[P,Q]=[0.5,1]$ (i.e., the coherent BPSK modulation scheme is considered). The considered outage threshold is $\gamma_{\rm th}=0$dB (i.e., ensuring a minimum target data rate of 1 bps/Hz). Curve-lines and circle-marks stand for the analytical and simulation results, respectively.

In Figs.~\ref{fig1} and~\ref{fig2}, the outage and ABER performance is, respectively, illustrated for the considered dual-hop relay schemes. The CSI-assisted AF relaying outperforms the AF scheme with a fixed gain, while DF outperforms both AF schemes, as expected. Insightfully, the impact of the turbulence-induced fading severity is more critical to the system performance than the pointing error effect. Furthermore, when $\min\{\alpha_{j},\beta_{j}\}> \xi^{2}_{j}$, the analytical results perfectly match the corresponding simulation points in the entire SNR regime. In the case when $\min\{\alpha_{j},\beta_{j}\}\leq \xi^{2}_{j}$, the analytical results closely bound the simulations, while the tightness of this bound is more emphatic for low-to-moderate SNR regions. Thus, the remark of \eqref{inequality} is verified. It is also noteworthy that the diversity order is the same for all the relay schemes and is related to the minimum of the involved channel fading parameters \cite{j:FaridHranilovic2012,j:YangGaoAlouini2014}.  

\begin{figure}[!t]
\centering
\includegraphics[trim=1.5cm 0.3cm 2.5cm 1.2cm, clip=true,totalheight=0.4\textheight]{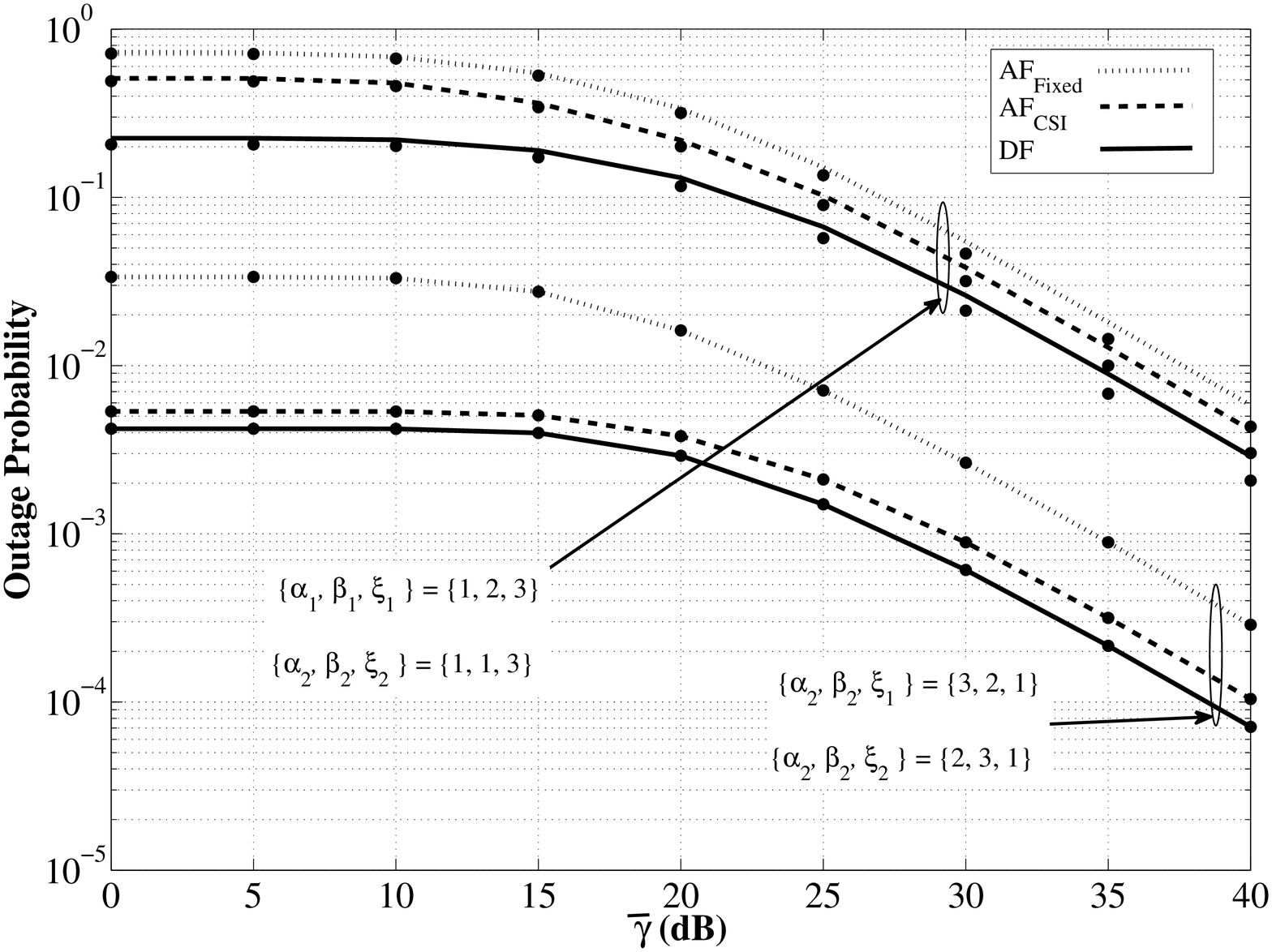}
\caption{Outage probability vs. various average received SNR values, where $\bar{\gamma}_{1}=\bar{\gamma}_{2}\triangleq \bar{\gamma}$.}
\label{fig1}
\end{figure}

\begin{figure}[!t]
\centering
\includegraphics[trim=1.5cm 0.3cm 2.5cm 1.2cm, clip=true,totalheight=0.4\textheight]{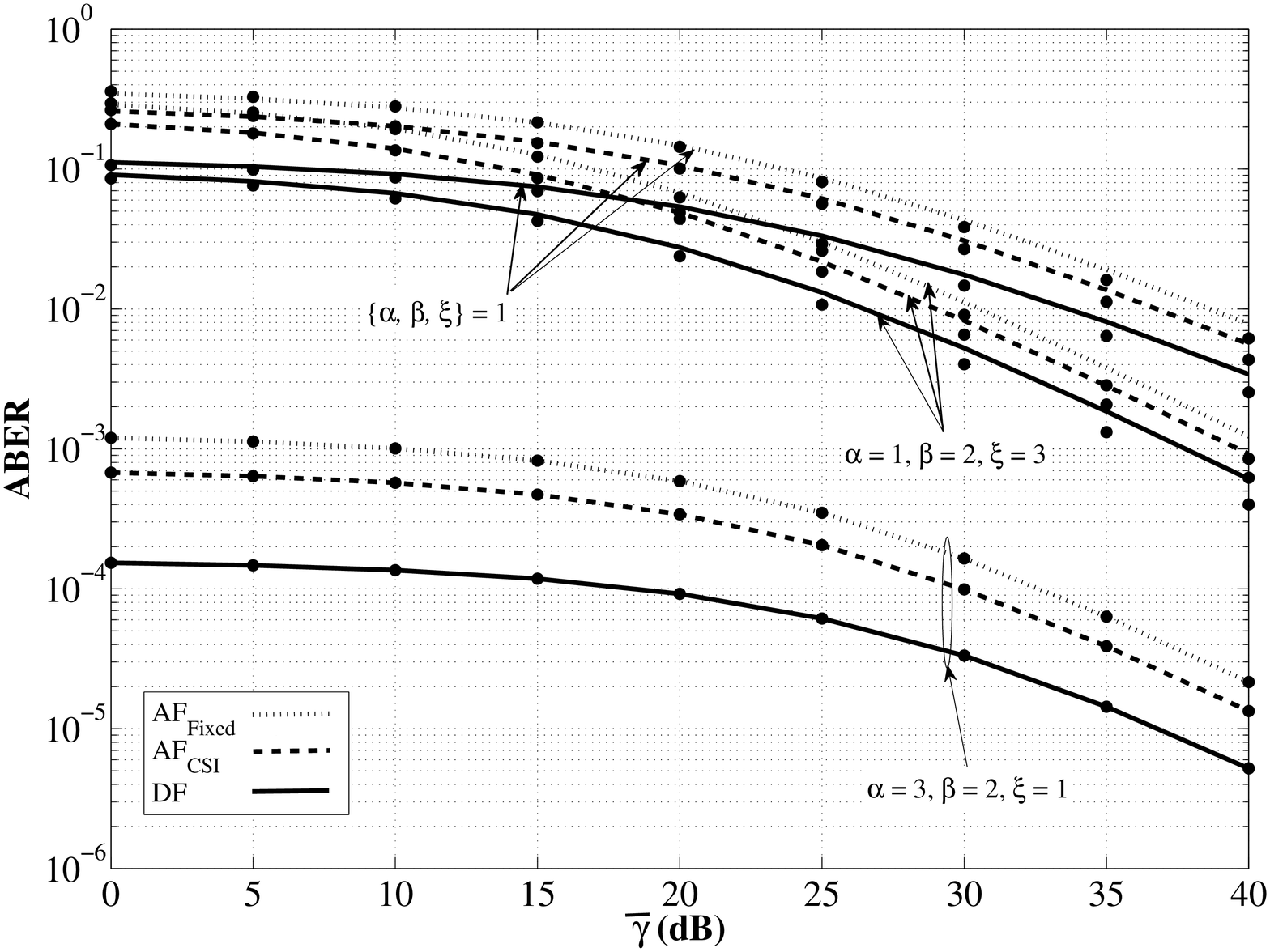}
\caption{ABER vs. various average received SNR values, under identical channel fading parameters for each transmission hop.}
\label{fig2}
\end{figure}

\section{Conclusion}
\label{Conclusion}
The performance of all-optical dual-hop relayed FSO systems was analytically studied and evaluated in the entire SNR regime. Three popular relaying transmission protocols were adopted, i.e., the CSI-assisted AF, AF with a fixed gain and DF relaying schemes. The joint effect of turbulence- and misalignment-induced channel fading was considered. Important system performance metrics were derived in new closed formulations, i.e., the system outage probability and ABER. Finally, a numerical verification of the analytical results was implemented, manifesting the efficiency of the proposed approach. 

%\begin{acknowledgements}
%If you'd like to thank anyone, place your comments here
%and remove the percent signs.
%\end{acknowledgements}

% BibTeX users please use one of
%\bibliographystyle{spbasic}      % basic style, author-year citations
\bibliographystyle{spmpsci}      % mathematics and physical sciences
\bibliography{References}   % name your BibTeX data base

\end{document}